\newcommandx{\unsure}[2][1=]{\todo[linecolor=red,backgroundcolor=red!25,bordercolor=red,#1]{#2}}
\newcommandx{\done}[2][1=]{\todo[linecolor=green,backgroundcolor=green!25,bordercolor=green,#1]{#2}}
\def\be{\begin{equation}}
\def\ee{\end{equation}}
\theoremstyle{plain}
\newtheorem{theorem}{Theorem}
\newtheorem{proposition}[theorem]{Proposition}
\theoremstyle{definition}
\newtheorem{definition}[theorem]{Definition}
\theoremstyle{remark}
\newtheorem*{example}{Example}
\numberwithin{equation}{section}
\numberwithin{theorem}{section}
\numberwithin{figure}{section}
\numberwithin{table}{section}
\newcommand{\RR}{\mathbb{R}}
\newcommand{\pr}{\partial}
\newcommand{\ZZ}{\mathbb{Z}}
\begin{document}
%\end{document}

\title[Trapped Photons in Schwarzschild-Tangherlini Spacetimes]{Trapped Photons in Schwarzschild-Tangherlini Spacetimes}

\author[M Bugden]{Mark Bugden}
\address[M Bugden]{Mathematical Institute, Faculty of Mathematics and Physics,
	Charles University Prague 186 75, Czech Republic}
\email{bugden@karlin.mff.cuni.cz}

\tikzset{node distance=1in, auto}

\begin{abstract}
We study the set of trapped photons outside the event horizon of Schwarzschild-Tangherlini black holes in dimension $d\geq4$. We prove that the only trapped photons are those with constant Boyer-Lindquist coordinate radius. We then identify the set of trapped photons as a submanifold of the tangent bundle of the spacetime, and prove that this submanifold has topology $T^1 S^{n-1} \times \RR^2$, as suggested in \cite{CJ19}. 

\end{abstract}

% ===================================================================
% ===================================================================
% ===================================================================
% ===================================================================
% ===================================================================
% ===================================================================
% ===================================================================
% ===================================================================
% ===================================================================
% ===================================================================
% ===================================================================
% ===================================================================
% ===================================================================
% ===================================================================

\maketitle
%\tableofcontents
\section{Introduction}
\label{sec:Intro}
A curious feature of the Schwarzschild black hole is that, outside the event horizon, there is a radius at which light can orbit. Photons at this radius can move in circular orbits around the black hole, and the collection of such orbits forms a surface called the photon sphere. Although these orbits are unstable under radial perturbations, they are nevertheless interesting from a physical and mathematical point of view since they provide valuable information on geometric and optical properties of the black hole. For static spacetimes in four dimensions, photon spheres feature prominently in uniqueness proofs \cite{C15,CG15,CG16,LY15,Y15,LY16}. In addition, the optical appearance of a star undergoing gravitational collapse \cite{AT68}, the shadow of a black hole \cite{G16}, how the night sky would look to an observer near a black hole \cite{N93}, as well as the analysis of stability for black holes \cite{DR08} are all related to the properties of the photon sphere. For more information on the geometry of photon spheres in four-dimensions, we refer the reader to the article by Claudel, Virbhadra and Ellis \cite{CEV00}. \\

Spherical photon orbits can also exist in more interesting black holes, such as rotating black holes in four \cite{T03} and higher dimensions \cite{B18}, and $\eta$-deformed black holes \cite{DJL17}. For a rotating black hole, the structure of spherical photon orbits changes dramatically. The orbits of photons are no longer contained in a plane, and need not even have a fixed azimuthal direction. For these black holes, the set of trapped photons is no longer a sphere, and is not even a submanifold of spacetime.\footnote{The region accessible to trapping ``pinches" at some points}. It turns out that there is a nicer geometric description of the set of trapped photons utilising the tangent bundle of spacetime. Here, one identifies geodesics in spacetime with points in the tangent bundle. This was done for Schwarzschild and Kerr black holes in 4 dimensions in \cite{CJ19}, where they proved that such a subset is a submanifold of the tangent bundle, and found that the topology of the set of trapped photons was $SO(3) \times \RR^2$.\\

Due to their relevance to string theory and supergravity, higher dimensional black hole solutions have garnered significant attention in recent years. The first string-theoretic calculation of black hole entropy was done for a five-dimensional black hole \cite{SV96}, and five-dimensional gravity is related via the AdS/CFT correspondence to a four-dimensional QFT \cite{M99}. As testing grounds for mathematical relativity, higher-dimensional black holes provide interesting features which are not present in four-dimensional relativity. A striking example is the black-ring solution in five-dimensions \cite{ER02}, which has an event horizon of topology $S^1 \times S^2$. Other novel features are rotating black holes which can rotate in multiple independent planes, or rotating black holes which can have arbitrarily high angular momenta \cite{MP86}. \\

In this paper we study the set of trapped photons - that is, photons which never pass through the horizon or escape to spatial infinity - for the Schwarzschild-Tangherlini black hole spacetimes in dimension $d\geq 4$. The set of spherical photon orbits form a subset of these trapped photons, but it is not \emph{a priori} clear that all trapped photons must have constant coordinate radius. The first result we prove is that the spherical photon orbits are the only trapped photons. Our second result is a characterisation of the topology of the set of trapped photons. We follow \cite{CJ19} in viewing the set of trapped photons as a subset of the tangent bundle of the spacetime, and prove that this subset is a manifold of topology $T^1 S^{n-1} \times \RR^2$, as suggested in \cite{CJ19}.  \\
	
We begin in Section \ref{sec:Notation} by introducing the Schwarzschild-Tangherlini black holes and fixing our notation. In Section \ref{sec:Results} we prove the stated results. Finally, in Section \ref{sec:summary}, we provide a brief summary and discuss our results and future work.  \\

% ===================================================================
% ===================================================================
% ===================================================================
% ===================================================================
% ===================================================================
% ===================================================================
% ===================================================================
% ===================================================================
% ===================================================================
% ===================================================================
% ===================================================================
% ===================================================================
% ===================================================================
% ===================================================================

\section{Notation}
\label{sec:Notation}

We begin by introducing the Schwarzschild-Tangherlini black hole in dimension $d = n+1 \geq 4$. These black holes are spherically symmetric, static solutions to the Einstein Field Equations in $d = n+1$ dimensions \cite{T63}. The metric for the Schwarzschild -Tangherlini black hole is
\begin{align}
\label{eq:SchwTangmetric}
ds^2 &= -\left( 1- \frac{\mu}{r^{d-3}} \right) dt^2 + \frac{dr^2}{1- \frac{\mu}{r^{d-3}}} + r^2 d \Omega_{d-2}^2 ,
\end{align}
where $d\Omega_{d-2}^2$ is the metric for the unit $(d-2)$-dimensional sphere. The parameter $\mu$ is related to the mass of the black hole via
\begin{align*}
\mu &= \frac{16 \pi M}{(d-2) \Omega_{d-2}},
\end{align*}
where $\Omega_{d-2}$ is the volume of the unit sphere in $(d-2)$ dimensions. 
We will restrict our attention to black holes which are \emph{subcritical} - that is, where the mass is positive.

 The event horizon for the black hole is a surface at which the Killing vector field $\pr_t$ vanishes. It is located at a radius
\begin{align*}
r=r_H &:= \mu^{\frac{1}{d-3}}.
\end{align*}
In this paper, we will only be interested in the domain of outer communication (DOC) for the Schwarzschild-Tangherlini black hole, which is the spacetime patch where $r > r_H$. Note that in the DOC for a subcritical Schwarzschild-Tangherlini black hole there are no curvature singularities. The (smooth) Lorentzian manifold consisting of the DOC equipped with the Schwarzschild-Tangherlini metric will be denoted by $(\mathcal{M} , g)$.

Unstable, trapped null geodesics can exist in these spacetimes at constant radius \cite{DFR11}
\begin{align}
\label{eq:SchwTangphotonradius}
r = r_P &:= \left( \frac{\mu(d-1)}{2} \right)^{\frac{1}{d-3}}.
\end{align}
The sphere of radius $r = r_P$ at a fixed time is known as the photon sphere.

\section{The photon region in the Schwarzschild-Tangherlini black holes}
\label{sec:Results}
The photon sphere provides examples of trapped photons in the Schwarzschild-Tangherlini black hole spacetimes - that is, photons which do not fall into the black hole or escape to spatial infinity. In order to prove that these are the \emph{only} trapped photons, we follow \cite{CJ19} in precisely defining the notion of a trapped photon.
\begin{definition}
	A photon in the DOC of a stationary spacetime is called \emph{trapped} if its orbit in the quotient of the DOC under the action of the stationary Killing vector field $\pr_t$ is contained in a compact set.
\end{definition}
For the Schwarzschild-Tangherlini spacetimes, a photon is trapped if and only if the range of its radial coordinate is contained in a relatively compact subset of $(r_H,\infty)$. We now proceed to prove our first result. 

\begin{proposition}
\label{prop:schwtangsphere}
The spherical photon orbits are the only trapped photons in the DOC of a subcritical Schwarzschild-Tangherlini black hole.
\end{proposition}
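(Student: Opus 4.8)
The plan is to exploit the spherical symmetry and staticity of $(\mathcal{M},g)$ to reduce the null geodesic equations to a one-dimensional problem governed by an effective potential, and then to read off the trapping behaviour directly from the shape of that potential. First I would use the fact that the rotation group $SO(d-1)$ acts transitively on the angular sphere $S^{d-2}$: given any photon, I can choose a $2$-plane $P\subset\RR^{d-1}$ containing its initial spatial velocity, and the reflection fixing $P$ and acting as $-1$ on $P^\perp$ extends to an isometry of $(\mathcal{M},g)$ whose fixed-point set is $\RR_t\times\RR_r\times(P\cap S^{d-2})$. Since fixed-point sets of isometries are totally geodesic, and $P\cap S^{d-2}$ is a great circle, the orbit is confined to a three-dimensional submanifold with coordinates $(t,r,\phi)$ and induced metric $-f(r)\,dt^2+f(r)^{-1}dr^2+r^2\,d\phi^2$, where $f(r)=1-\mu r^{-(d-3)}$. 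This reduces the problem to the familiar planar one.

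On this submanifold the Killing fields $\pr_t$ and $\pr_\phi$ yield conserved quantities $E=f(r)\dot t$ and $L=r^2\dot\phi$, and the null condition $g(\dot\gamma,\dot\gamma)=0$ becomes
\[
\dot r^2 = E^2 - V(r), \qquad V(r) := f(r)\frac{L^2}{r^2} = \frac{L^2}{r^2} - \frac{\mu L^2}{r^{d-1}}.
\]
The purely radial case $L=0$ gives $\dot r^2=E^2$ constant, so $r$ is monotone and the photon either falls through the horizon or escapes; hence I may assume $L\neq 0$. A direct computation then shows that $V$ vanishes at $r=r_H$ (where $f=0$) and as $r\to\infty$, is strictly positive on $(r_H,\infty)$, and has a single critical point there, a global maximum located precisely at $r=r_P$; this recovers \eqref{eq:SchwTangphotonradius} and identifies $V_{\max}=V(r_P)$.

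The heart of the argument is a case analysis on the energy level $E^2$ relative to $V_{\max}$, using that $V$ is single-humped, so there is no potential well and hence no bounded oscillation. If $E^2>V_{\max}$ then $\dot r$ never vanishes, $r$ is monotone, and the photon is not trapped. If $E^2<V_{\max}$ there are exactly two turning points $r_1<r_P<r_2$; a photon confined to $(r_H,r_1]$ reaches the horizon, since $V(r_H)=0<E^2$ forces $\dot r^2>0$ near $r_H$, while one confined to $[r_2,\infty)$ escapes to infinity. In either subcase the radial range fails to be relatively compact in $(r_H,\infty)$, so by the trapping criterion stated before the proposition the photon is not trapped. The only surviving possibility is the constant-radius orbit $r\equiv r_P$, which is exactly a spherical photon orbit.

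The step I expect to be the main obstacle is the critical level $E^2=V_{\max}$, where $r_P$ is a double root of $E^2-V(r)$ and the circular orbit coexists with orbits that spiral asymptotically onto $r=r_P$. Here I would check that $V_{\max}-V(r)$ vanishes quadratically at $r_P$, so such a spiralling photon approaches $r_P$ only as the affine parameter tends to $+\infty$, while in the opposite time direction $\dot r$ keeps its sign and carries the photon either out to $r=\infty$ or in to the horizon $r_H$. Consequently its full radial range is either $(r_P,\infty)$ or $(r_H,r_P)$, neither of which is bounded away from both ends of $(r_H,\infty)$; thus only the genuine circular orbit $r\equiv r_P$ survives as trapped, establishing the proposition.
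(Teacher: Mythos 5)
Your proof is correct, and it rests on the same core fact as the paper's --- the effective potential has exactly one critical point in $(r_H,\infty)$, located at $r_P$, and no well, so bounded radial oscillation is impossible --- but the route differs in two substantive ways. First, the paper never reduces to planar motion: it quotes the separated geodesic equations of Singh--Ghosh \cite{SG17}, writing the radial equation as $r^4\dot r^2=\mathcal{R}(r)=E^2r^4-r^2\left(1-\mu r^{-(d-3)}\right)(\mathcal{K}+\Phi^2)$ with $\mathcal{K}$ a Carter-type separation constant, whereas you first confine the photon to a totally geodesic ``equatorial plane'' via the fixed-point set of a reflection isometry, so that only the elementary constants $E$ and $L$ are needed; this makes your argument self-contained (no reliance on separability) at the cost of an extra, standard, reduction step. (One small imprecision there: the $2$-plane $P$ must contain both the initial position on $S^{d-2}$, viewed as a unit vector in $\RR^{d-1}$, and the angular part of the initial velocity --- not just ``the spatial velocity'' --- which is clearly what your construction $P=\mathrm{span}(x_0,v_0)$ intends.) Second, the logical organization differs: the paper argues by contradiction --- a trapped photon of non-constant radius would have radial range $[r_1,r_2]$ whose endpoints are zeros of $\mathcal{R}$ (turning points or asymptotic limits), and since $V_{eff}=-\mathcal{R}/r^4$ has its unique critical point at $r_P$ and is non-positive at $r_H$ and at infinity, two distinct zeros force $V_{eff}(r_P)>0$, i.e.\ $\mathcal{R}(r_P)<0$, contradicting $\mathcal{R}\geq0$ on $[r_1,r_2]$. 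You instead classify all null geodesics by the energy level $E^2$ relative to $V_{\max}$, handling the homoclinic case $E^2=V_{\max}$ explicitly via the quadratic vanishing of $V_{\max}-V$ at $r_P$; the paper absorbs those spiralling orbits implicitly into the observation that asymptotic approach also produces a zero of $\mathcal{R}$ at the endpoint. What each buys: your version is more elementary and yields a complete classification of the radial behaviour of every photon, while the paper's is shorter and works directly with the full angular dynamics in all dimensions without choosing a plane.
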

\begin{proof}
Let us assume, for the sake of contradiction, that there is a trapped photon $\gamma = (t,r,\boldsymbol{\theta})$ in the DOC which has non-constant coordinate radius. Since it is a trapped photon, the range of the radial coordinate must be a relatively compact set in $(r_H , \infty)$. The radius is non-constant but contained within a relatively compact set in $(r_H, \infty)$, so there must be some $r_1$ and $r_2$ such that $r_H < r_1 < r_2 < \infty$ and for which $r \in [r_1, r_2]$. The geodesic equations for this background are computed in \cite{SG17}, and the radial geodesic equation is given by
\begin{align*}
r^4 \dot{r}^2 &=\mathcal{R}(r) := E^2 r^4 - r^2 \left( 1- \frac{\mu}{r^{d-3}} \right)(\mathcal{K} + \Phi^2).
\end{align*}
In order for the radial geodesic equation to have real solutions, we require $\mathcal{R}$ to be positive in $[r_1,r_2]$. At the endpoints $r_i$ of this interval we must have either $\dot{r} = 0$ (corresponding to a radial turning point for the photon), or else we must have $\lim_{s\to+\infty} r(s) = r_i$ or $\lim_{s\to-\infty} r(s) = r_i$, (corresponding to the photon asymptotically approaching this radial value). Either way, this corresponds to having a zero of $\dot{r}$, and therefore $\mathcal{R}$, at $r_i$. The zeros of the function $\mathcal{R}$ are the same as the zeros of the effective potential, $V_{eff}$, defined in \cite{SG17}:
\begin{align*}
V_{eff} &:= -\frac{\mathcal{R}(r)}{r^4} \\
&= \frac{1}{r^2} \left( 1-\frac{\mu}{r^{d-3}} \right)(\mathcal{K} + \Phi^2) - E^2.
\end{align*}
A direct computation reveals that there is precisely one critical point of $V_{eff}$, located at $r = r_P$. We also note that $V(r_H) \leq 0$, and $\lim_{r\to\infty}V_{eff}(r) \leq 0$, since $E$ is a real quantity. It follows that there are two real distinct roots of $V_{eff}$ in $[r_1,r_2]$ if and only if $V_{eff}(r_P)>0$. This gives us our required contradiction, however, since this implies that $\mathcal{R}(r_P) < 0$.
\end{proof}
The region in spacetime where one can find trapped photons is generally referred to as the \emph{photon region}. We have just proven for Schwarzschild-Tangherlini black holes, the spatial slice of the photon region is just the photon sphere. 

In order to prove our second result, which characterises the topology of the photon region as a submanifold of the tangent bundle \`{a} la \cite{CJ19}, we must first discuss the identification of geodesics with points in the tangent bundle. Recall that a point in the tangent bundle $T\mathcal{M}$ can be specified by a pair $(x,v)$, where $x$ is a point on the manifold $\mathcal{M}$, and $v$ is a tangent vector to $\mathcal{M}$ at $x$. Then given a geodesic $\gamma$, we identify the corresponding point in $T\mathcal{M}$ by
\begin{align*}
\gamma \mapsto \Big( \gamma(0) , \dot{\gamma}(0) \Big),
\end{align*}
where we distinguish between geodesics of difference affine parametrisations. This identification is a bijection since geodesics are uniquely specified by $\Big( \gamma(0) , \dot{\gamma}(0) \Big)$. For the next result, we recall that the unit tangent bundle $T^1 \mathcal{N}$ of a Riemannian manifold $(\mathcal{N},g)$ is a fiber bundle over $\mathcal{N}$ whose fiber is the unit sphere in the tangent bundle. That is,
\begin{align*}
T^1 \mathcal{N} &:= \coprod_{x \in T \mathcal{N}} \{ v \in T_x \mathcal{N} : g_{x} (v,v) = 1 \} 
\end{align*}
We shall now prove that for Schwarzschild-Tangherlini black holes, the photon region in the tangent bundle has the topology suggested in \cite{CJ19}. 
\begin{proposition}
	\label{prop:schwtangtopology}
	Let $(\mathcal{M},g)$ be a Schwarzschild-Tangherlini black hole of dimension $d = n+1$. The photon region in the tangent bundle, $\mathcal{P} \subset T\mathcal{M}$, has topology $T^1 S^{n-1} \times \RR^2$.
\end{proposition}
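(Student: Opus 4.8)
The plan is to turn Proposition~\ref{prop:schwtangsphere} into an explicit set-theoretic description of $\mathcal P$ and then to write down a homeomorphism onto $T^1 S^{n-1}\times\RR^2$. Write $f(r)=1-\mu/r^{d-3}$ for the lapse; since $d\geq 4$ we have $r_P>r_H$ and hence $f(r_P)>0$. By Proposition~\ref{prop:schwtangsphere} every trapped photon has $r\equiv r_P$, so its velocity satisfies $v^r=\dot r=0$. For the converse I would argue that $r=r_P$ and $\dot r=0$, together with the null condition, already force the orbit to remain at $r_P$: the radial equation may be written $\dot r^2=-V_{eff}(r)$, so $\dot r=0$ at $r_P$ gives $V_{eff}(r_P)=0$, while the computation in the proof of Proposition~\ref{prop:schwtangsphere} shows that $r_P$ is the unique critical point of $V_{eff}$, so $V_{eff}'(r_P)=0$ for every admissible choice of conserved quantities. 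Thus $r_P$ is a double zero of $-V_{eff}$, the constant $r\equiv r_P$ solves the radial equation, and the geodesic is trapped. Consequently
\[
\mathcal P=\bigl\{(x,v)\in T\mathcal M:\ r(x)=r_P,\ v^r=0,\ g(v,v)=0,\ v\ \text{future-directed}\bigr\},
\]
where restricting to future-directed $v$ encodes the convention that a photon is a future-directed null geodesic; keeping both time orientations would merely double $\mathcal P$ into two disjoint homeomorphic copies, since $T^1S^{n-1}$ is connected for $n-1\geq 2$.

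Next I would exploit the warped-product form of \eqref{eq:SchwTangmetric}. On the slice $\{r=r_P\}$ the tangent space splits as $\RR\,\pr_t\oplus\RR\,\pr_r\oplus T_pS^{n-1}$, so a vector with $v^r=0$ is $v=\dot t\,\pr_t+w$ with $w\in T_pS^{n-1}$, and $g(v,v)=0$ reads $f(r_P)\,\dot t^2=r_P^2\,|w|^2$, with $|\cdot|$ the round unit-sphere norm. A non-zero future-directed such $v$ has $w\neq 0$ and $\dot t=(r_P/\sqrt{f(r_P)})\,|w|>0$, so $\dot t$ is determined by $w$ and the entire velocity is encoded by the non-zero angular velocity $w$. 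Remembering the base time $t_0$ (the sphere point $p$ is recorded as the foot of $w$), this yields a bijection
\[
\mathcal P\ \xrightarrow{\ \sim\ }\ \RR_{t_0}\times\bigl(TS^{n-1}\setminus 0\bigr),\qquad (x,v)\mapsto (t_0,w),
\]
with $0$ the zero section. Its inverse sends $(t_0,w)$, with foot $p$, to the point of $T\mathcal M$ with base $(t_0,r_P,p)$ and velocity $(r_P/\sqrt{f(r_P)})\,|w|\,\pr_t+w$; both maps are continuous for the subspace topology, so this is a homeomorphism.

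Finally, the polar decomposition $w\mapsto(w/|w|,|w|)$ gives $TS^{n-1}\setminus 0\cong T^1S^{n-1}\times\RR_{>0}$, and $\RR_{>0}\cong\RR$, whence
\[
\mathcal P\ \cong\ \RR_{t_0}\times T^1S^{n-1}\times\RR_{>0}\ \cong\ T^1S^{n-1}\times\RR^2,
\]
as claimed. To record $\mathcal P$ as a genuine embedded submanifold of $T\mathcal M$ I would additionally check that the three constraints $r-r_P$, $v^r$ and $g(v,v)$ have pointwise independent differentials along $\mathcal P$ (a regular-value argument, away from the zero vector).

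The routine parts are the null computation and the polar decomposition. The step needing the most care is the precise identification of $\mathcal P$ as a set: the converse direction, that $r=r_P$ with $\dot r=0$ produces a genuinely trapped orbit rather than an instantaneous radial turning point, relies on $V_{eff}'(r_P)=0$ holding for all admissible conserved quantities, and one must track the time orientation carefully to obtain a single connected copy of $T^1S^{n-1}\times\RR^2$ rather than two.
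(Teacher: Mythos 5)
Your proof is correct and takes essentially the same approach as the paper: both reduce to the photon sphere via Proposition \ref{prop:schwtangsphere}, use the null condition to express the time component of the velocity through the angular velocity, and split off an $\RR^2$ factor (your $t_0$ coordinate and polar-decomposition radius $|w|$ are precisely the paper's time-translation and energy-rescaling invariances, with its slice $\mathcal{P}_0$ corresponding to your $T^1 S^{n-1}$). The only substantive addition is that you explicitly prove the converse inclusion --- that a null vector at $r = r_P$ with $\dot r = 0$ generates a genuinely trapped orbit, via $V_{eff}(r_P) = V_{eff}'(r_P) = 0$ and ODE uniqueness --- a step the paper's proof leaves implicit.
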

\begin{proof}
	Since the photon region is invariant under time translation generated by $\pr_t$ and under rescaling the energy 
	\begin{align*}
	(E, \dot{t} , \dot{r}, \dot{\theta}_1, \dot{\theta}_2, \dots,  \dot{\theta}_{d-2}) \mapsto \lambda 	(E, \dot{t} , \dot{r}, \dot{\theta}_1, \dot{\theta}_2, \dots,  \dot{\theta}_{d-2}),
	\end{align*}
	we can restrict our attention to the $2n$-dimensional slice 
	\begin{align*}
	\mathcal{P}_0 &= \mathcal{P} \cap \left\{ t=0, \, p_{t} = -E = -\sqrt{\frac{d-1}{d-3} } \right\}
	\end{align*}
	in the tangent bundle.\footnote{The reason for this seemingly strange scaling of $E$ will become clear below.} Trapped null geodesics for Schwarzschild-Tangherlini black holes all have constant coordinate radius $r = r_P$, and so the photon region is a bundle over an $(n-1)$-sphere of radius $r_P$, with as-yet unknown fibers. At a point on the photon sphere, an arbitrary trapped geodesic has the form	
	\begin{align*}
	\gamma &= (t ,r ,\boldsymbol{\theta}, \dot{t} , \dot{r} ,  \dot{\boldsymbol{\theta}} ) = \left( 0 , r_P , \boldsymbol{\theta} ,  -E, 0 , \dot{\boldsymbol{\theta}} \right),
	\end{align*}
	where $\boldsymbol{\theta}$ are angular coordinates on the photon sphere. The length-squared of such a geodesic is
	\begin{align*}
	g(\dot{\gamma} , \dot{\gamma} ) &= -\left( \frac{d-3}{d-1} \right) E^2 + g_{r_P}(\dot{\boldsymbol{\theta}} , \dot{\boldsymbol{\theta}} ),
	\end{align*}
	where $g_{r_P}$ is the (Riemannian) round metric on the photon sphere induced by the Schwarzschild-Tangherlini metric. It follows that such a geodesic is null provided 
	\begin{align*}
	g_{r_P}(\dot{\boldsymbol{\theta}} , \dot{\boldsymbol{\theta}} ) &= \left( \frac{d-3}{d-1} \right) E^2 = 1
	\end{align*}
	That is, $\mathcal{P}_0$ is the unit tangent bundle over $S^{n-1}$, and the photon region has the topology $$T^1 S^{n-1} \times \RR^2.$$
\end{proof}
Note that $T^1 S^{n-1}$ is always an $S^{n-2}$-bundle over $S^{n-1}$.
\begin{example}
	The four-dimensional case was covered in detail in \cite{CJ19}. The unit tangent bundle $T^1 S^2$ is a non-trivial circle bundle over $S^2$ with Euler class $2$, and is therefore homeomorphic to the Lens space $L(2;1) \simeq SO(3)$. It follows that the topology of $\mathcal{P}$ is $SO(3) \times \RR^2$.
\end{example}
\begin{example}
	In five-dimensions, the unit tangent bundle $T^1 S^3$ is an $S^2$--bundle over $S^3$. An old result of Steenrod says that such a bundle must be trivial \cite{S44}, so we find that the topology of $\mathcal{P}$ must be $S^2 \times S^3 \times \RR^2$. 
\end{example}
\begin{example}
	In six dimensions, the unit tangent bundle $T^1 S^4$ is an $S^3$-bundle over $S^4$. This bundle is sometimes referred to as $M_{-1,2}$, and is (orientation reversing) diffeomorphic to ``twice'' the Hopf fibration \cite{CE00}. That is, the topology of $\mathcal{P}$ is $S^7/\ZZ_2 \times \RR^2$.
\end{example}

% ===================================================================
% ===================================================================
% ===================================================================
% ===================================================================
% ===================================================================
% ===================================================================
% ===================================================================
% ===================================================================
% ===================================================================
% ===================================================================
% ===================================================================
% ===================================================================
% ===================================================================
% ===================================================================

\section{Summary and Discussion}
\label{sec:summary}

In this paper we have proven two key results about the set of trapped photons outside the event horizon of Schwarzschild-Tangherlini black holes in arbitrary dimension $d \geq 4$. The first result states that the only trapped photons are the photons of constant coordinate radius - that is, the only trapped photons must lie on the photon sphere at radius $r = r_P$. The second result is a characterisation of the topology of the photon region as a submanifold of the tangent bundle of spacetime. It states that the topology of the photon sphere in the tangent bundle is $T^1 S^{n-1} \times \RR^2$, as suggested in \cite{CJ19}.\\

The first result says that the spatial slice of the photon region in Schwarzschild-Tangherlini black holes is just the photon sphere. The photon region is therefore a smooth submanifold of the spacetime, equipped with a smooth metric induced from the ambient spacetime metric, and we understand both its geometry and topology very well. Given this, one may wonder why the second result is useful or necessary. If we understand the geometry as a submanifold of spacetime, why do we need to complicate matters by moving to the tangent bundle? The answer to this question lies in the generalisation to rotating black holes. As shown in \cite{CJ19}, in four dimensions the photon region in a Kerr black hole is not a submanifold of spacetime, but \emph{is} a submanifold of the tangent bundle of spacetime. In particular, the photon region of the Kerr black hole has the same topology of the Schwarzschild photon region \cite{CJ19}. Like the four-dimensional case, the photon region for higher-dimensional rotating black holes is not a submanifold of the spacetime, but one may hope that it is a submanifold of the tangent bundle with the same topology as the non-rotating case. Work on extending the results of \cite{CJ19} to rotating black holes in five dimensions is currently in progress \cite{B20}.

% ===================================================================
% ===================================================================
% ===================================================================
% ===================================================================
% ===================================================================
% ===================================================================
% ===================================================================
% ===================================================================
% ===================================================================
% ===================================================================
% ===================================================================
% ===================================================================
% ===================================================================
% ===================================================================

 \section*{Acknowledgements}

Research of M.B. was supported by the GA\v{C}R Grant EXPRO 19-28628X.
% ===================================================================
% ===================================================================
% ===================================================================
% ===================================================================
% ===================================================================
% ===================================================================
% ===================================================================
% ===================================================================
% ===================================================================
% ===================================================================
% ===================================================================
% ===================================================================
% ===================================================================
% ===================================================================

% ===================================================================
\end{document}